\theoremstyle{plain}
\newtheorem{theorem}{Theorem}[section]
\newtheorem{lemma}[theorem]{Lemma}
\newtheorem{corollary}[theorem]{Corollary}
\theoremstyle{definition}
\newtheorem{definition}[theorem]{Definition}
\newtheorem{remark}[theorem]{Remark}
\newtheorem{example}[theorem]{Example}
\theoremstyle{remark}
\newcommand{\N}{\mathbb{N}}
\newcommand{\R}{\mathbb{R}}
\newcommand{\cF}{\mathcal{F}}
\newcommand{\cH}{\mathcal{H}}
\newcommand{\cM}{\mathcal{M}}
\newcommand{\cN}{\mathcal{N}}
\newcommand{\cP}{\mathcal{P}}
\newcommand{\cQ}{\mathcal{Q}}
\newcommand{\fL}{\mathfrak{L}}
\newcommand{\fC}{\mathfrak{C}}
\newcommand{\bL}{\mathbf{L}}
\newcommand{\bC}{\mathbf{C}}
\DeclareMathOperator{\limmed}{lim\, med}
\DeclareMathOperator{\card}{card}
\newcommand{\sint}{\stackrel{\mbox{\tiny$\bullet$}}{}}
\numberwithin{equation}{section}
\begin{document}

\title{\vspace{-.3em}
Superreplication under Model Uncertainty\\in Discrete Time
\date{First version: January 14, 2013. This version: \today.}
\author{
  Marcel Nutz%
  \thanks{
  Department of Mathematics, Columbia University, mnutz@math.columbia.edu. It is the author's pleasure to thank Bruno Bouchard and Freddy Delbaen for stimulating discussions, as well as the anonymous referees and the Associate Editor for their numerous comments. Research supported by NSF Grant DMS-1208985.
  }
 }
}
\maketitle \vspace{-1em}

\begin{abstract}
We study the superreplication of contingent claims under model uncertainty in discrete time. We show that optimal superreplicating strategies exist in a general measure-theoretic setting; moreover, we characterize the minimal superreplication price as the supremum over all continuous linear pricing functionals on a suitable Banach space. The main ingredient is a closedness result for the set of claims which can be superreplicated from zero capital; its proof relies on medial limits.
\end{abstract}

\vspace{.9em}

{\small
\noindent \emph{Keywords} Knightian uncertainty; Nondominated model; Superreplication; Martingale measure; Medial limit; Hahn--Banach theorem

\noindent \emph{AMS 2010 Subject Classification}
60G42; %
91B28; %
93E20 %

\noindent \emph{JEL Subject Classification} D81; G12

\section{Introduction}\label{se:intro}

A classical probabilistic model of a financial market consists of a measurable space $(\Omega,\cF)$ and a probability measure $P$ determining the distribution of stock prices. Such a model includes randomness in the sense that we do not specify a priori which events will take place; however, we do specify the likelihood of any outcome precisely. In standard pricing procedures, a model is chosen from a parametric family and the parameters are determined through a calibration. It is understood that the model is an ad-hoc tool rather than an accurate representation of the market dynamics; for instance, if the calibration is repeated at a subsequent date, it will lead to different parameter values. %
As a result of the estimation error for the parameters and, possibly more importantly, the choice of the parametric family in the first place, there is a substantial model risk in the pricing and hedging of contingent claims.
Model uncertainty, as we understand it, refers to a situation where the distributions are not assumed to be (completely) known a priori. Rather than having a single probabilistic model, we may want to take into account a whole collection $\cP$ of possible models, each model being represented by a probability measure on $(\Omega,\cF)$. For instance, this collection may include Binomial models with different parameters, but also a mix of different ``types'' of models, e.g., with discrete and continuous marginal distributions.

We study the superreplication of contingent claims under model uncertainty. Using the worst-case approach to consistently address this ``Knightian'' uncertainty,
we require the superreplication is to hold simultaneously under all measures $P\in\cP$ (``$\cP$-q.s.''). More precisely, given an adapted process $S$ and a random variable $f$, we are interested in determining the minimal superreplication price
\[
  x_*(f)=\inf\big\{x\in\R:\,\exists\, H \mbox{ such that } x+ H\sint S_T\geq f\;\cP\mbox{-q.s.}\big\};
\]
here $H$ is a trading strategy (defined simultaneously under all $P\in\cP$) and $H\sint S_T$ is the terminal wealth resulting from trading in $S$ at the discrete dates $t=1,\dots,T$ according to $H$. The difference $x_*(f)+x_*(-f)$ between super- and subreplication price gives a bound for a model risk in the pricing of $f$, relative to the chosen collection $\cP$ of models. Moreover, we want to show that an optimal superreplicating strategy exists; i.e., that the infimum is actually attained for some $H$.

Both problems are well understood in the absence of model uncertainty. Indeed, when $\cP$ contains only one probability measure $P$, then an optimal strategy exists and
\[
  x_*(f)= \sup_{Q\in\cM_e(P)} E_Q[f],
\]
where $\cM_e(P)$ denotes the set of all probability measures $Q$, equivalent to $P$, such that $S$ is a $Q$-martingale. This holds under a no-arbitrage condition which by the fundamental theorem of asset pricing is equivalent to $\cM_e(P)$ being nonempty, and embodies a fundamental duality between wealth processes and the linear pricing functionals $\{E_Q[\,\cdot\,],\, Q\in \cM_e(P)\}$. It is well known that the price $x_*(f)$ can be relatively large for practical purposes; however, superreplication is of primal theoretical importance, for instance, in the solution of portfolio optimization problems. We refer to~\cite{DelbaenSchachermayer.06} and the references therein for the classical theory, and in particular to~\cite{FollmerSchied.04} for the discrete-time case.

Our aim is to obtain similar results in the situation of model uncertainty; i.e., when $\cP$ can have many elements. It is certainly reasonable to suppose that each of the possible models $P\in\cP$ is viable in the usual sense and thus admits an equivalent martingale measure for $S$. As the superreplication problem depends only on the nullsets of the given measures, we may replace each $P\in\cP$ by one of its equivalent martingale measures and as a result, we may assume directly that $\cP$ itself consists of martingale measures. We want to include specifically the possibility that $\cP$ is nondominated in the sense that there exists no reference probability measure $P_*$ with respect to which all $P\in\cP$ are absolutely continuous; in fact, we would like to unify, and interpolate between, the classical case where $\cP$ is a singleton and the completely model-free case where no probabilistic assumptions are made and $\cP$ consists of all martingale measures for $S$, possibly mutually singular. Our main result (Theorem~\ref{th:existence}) states that optimal superreplicating strat\-egies exist in a general measure-theoretic setting. Moreover, $x_*(f)$ is described as the supremum over all continuous linear pricing functionals on a certain Banach space.

While a domination assumption is often made for technical convenience in the literature, nondominated models arise naturally in the representation of robust preferences in decision theory, given the axiom of ambiguity aversion. As shown in~\cite{GilboaSchmeidler.89} for the one-period case, the numerical representation then takes the form of an infimum of expected von Neumann--Morgenstern utilities, and the infimum is taken over a set of measures that is not dominated in general. (See also~\cite{EpsteinJi.2013} for a recent continuous-time model in the context of finance.) On the other hand, allowing for the nondominated case can also result in a computational convenience, similarly as a continuous space $\Omega$ is more likely to lead to explicit results than  a discrete space. As an example, take $S$ to be the canonical process on the path space $\Omega=\R^T$ and let $\cP$ be the set of all probabilities under which $S$ is a martingale (here $S$ models the price of a forward contract). In this situation, $x_*(f)$ and even a corresponding strategy can be computed quite explicitly by means of convex analysis. For the purpose of illustration, consider the case where $T=1$ and assume that $f=f(S_1)$. Then, $x_*(f)$ is given simply by the value $f^\sharp(S_0)$ of the concave envelope of $f$ evaluated at the current stock price $S_0$; moreover, like in the familiar Delta hedging, a superreplicating strategy $H$ can be determined by taking the (right, say) derivative of $f^\sharp$ at $S_0$. This construction can be extended to the multi-period case; cf.\ \cite{BeiglbockNutz.14}. On the other hand, one may wonder if an optimal strategy still exists in a case where we have a prior view on the class $\cP$ of possible models and no explicit formula is available, which is the general situation that we propose to study.

To give another example of a nondominated model, let $S$ again be the canonical process on $\Omega=\R^T$ and let $\cP$ be the set of all probabilities $P$ such that $P$-a.s., $S$ is positive and $S_{t+1}/S_t$ is in a given interval $I$ for all $t$; i.e., there is uncertainty about the log-increments of $S$ and only the bound $I$ is given. (A similar setup was used in~\cite{DolinskyNutzSoner.11} as a discrete approximation to the $G$-Brownian motion of~\cite{Peng.07}.) As a more general example, given a process $S$ on some measurable space, we can prescribe any collection $\cN$ of sets and take $\cP$ to be the family of all martingale measures not charging $\cN$. In the special case where $\cN$ is the collection of nullsets of a given reference measure $P_*$, this yields the classical set of absolutely continuous martingale measures, but generically, it yields a nondominated set.

To the best of our knowledge, there are no previous existence results for superreplication under model uncertainty in discrete time, and more generally for price processes $S$ with jumps (except in the case where strategies are constants; cf.\ \cite{Riedel.11}). There are, however, results for continuous processes $S$ with ``volatility uncertainty'' in specific setups; see \cite{BouchardMoreauNutz.12, FernholzKaratzas.11, NeufeldNutz.12, NutzSoner.10, Peng.10, SonerTouziZhang.2010dual, SonerTouziZhang.2010rep, Song.10} and the references therein. All these results have been obtained by control-theoretic techniques which, as far as we have been able to see, cannot be applied in the presence of jumps. A duality result (without existence) for a specific topological setup in discrete time was obtained in~\cite{DeparisMartini.04}, while \cite{DenisMartini.06} gave a comparable result for the continuous case.

A related topic is the so-called model-free pricing introduced by \cite{Dupire.94,Hobson.98}, where superreplication is achieved by trading in the stock $S$ and (statically) in a given set of options. On the dual side, this is related to the set of all martingale measures for $S$ which are compatible with the prices of the given options, and this set is typically nondominated. In fact, the absence of a dominating measure is a consequence of the absence of modeling assumptions, which is precisely the aim of the approach. A survey can be found in~\cite{Hobson.11}; recent results are \cite{AcciaioBeiglbockPenknerSchachermayer.12, BeiglbockHenryLaborderePenkner.11} in discrete time and
\cite{DolinskySoner.12, GalichonHenryLabordereTouzi.11,HenryLabordereOblojSpoidaTouzi.12} in the continuous case. Once more, we are not aware of general existence results in the case with jumps, and it is worth noting that in contrast to the present work, the mentioned papers consider topological setups and contingent claims which are functions of $S$ alone.

It is clear that, like in the classical case, the price $x_*(f)$ will often be relatively large for practical purposes. Nevertheless, the superreplication problem is a key tool in many situations. As an example, let us consider the mentioned problem of semistatic hedging; clearly, the presence of the options as additional hedging instruments will often reduce the superreplication price significantly. More precisely, let $S$ to be the canonical process on $\Omega=\R^T$ and suppose that the marginal law $\mu$ of $S_T$ is given; this corresponds to the availability of all European options $g(S_T)$ as hedging instruments at price $E_\mu[g]$. The construction of an optimal semistatic superreplication requires a passage to the limit of almost-optimal trading strategies $H^n$ for the stock as well as of options $g^n$. After a suitable normalization of $H^n$, one can apply Komlos' lemma under the given measure $\mu$ to find a limit $g$ of convex combinations of $(g^n)$. Then, the main question is to find an optimal hedging portfolio trading in the stock alone, for the modified claim $f-g(S_T)$, and this is precisely the problem under consideration. (More general forms of semistatic hedging will be studied in forthcoming work.) This example also illustrates that it is important to allow for quite general contingent claims, since it may be very difficult to verify regularity assumption for the limit $g$ when no explicit formula is available.

The main mathematical novelty in this paper is a closedness result (Theorem~\ref{th:closedness}) for the cone $\fC$ of contingent claims which can be superreplicated from initial capital $x=0$. A natural space for this result is introduced; namely, we consider the locally convex space $\fL^1$ of measurable functions with the seminorms given by $\{E_P[|\cdot|],\, P\in\cP\}$. Our result states that $\fC$ is sequentially closed in $\fL^1$ (which is not a sequential space in general), and its proof makes use of the so-called medial limits (see Section~\ref{se:medial} for details). To have closedness rather than sequential closedness, we move to a Banach space $\bL^1\subseteq \fL^1$ whose topology, given by the norm $\|f\|_1=\sup_{P\in\cP} E_P[|f|]$, is stronger than the one of $\fL^1$. The main result is then obtained by a Hahn--Banach separation argument resembling the classical theory. The nondominated situation entails several difficulties at the technical level, in particular regarding limits of strategies. Tools like Komlos' lemma that are crucial in the classical theory fail because there is fundamentally no Hilbertian structure in the nondominated case. At the same time, measurability issues arise, due to a lack of separability and also because there is no reference probability under which one could complete the measure space.

There are at least two obvious questions which are not answered in this paper. First, while $x_*(f)$ is described as the supremum over all continuous linear pricing functionals, we do not establish that (or when, rather)
we have the formula $x_*(f)=\sup_{P\in\cP} E_P[f]$. This question will be studied in follow-up work (cf.\ \cite{BouchardNutz.13}). Second, we do not discuss the possible extension of the present results to the case of continuous-time processes with jumps.

The remainder of this paper is organized as follows. Section~\ref{se:medial} states the necessary facts about medial limits; Section~\ref{se:fL1} introduces the space $\fL^1$; Section~\ref{se:seqClosedness} contains the market model and the closedness result; Section~\ref{se:existence} states the main superreplication result, and the concluding Section~\ref{se:counterex} provides a counterexample showing that $\fL^1$ is not sequential.

\section{Medial Limits}\label{se:medial}

In this section, we state some properties of Mokobodzki's medial limit (cf.~\cite{Meyer.73} or \cite[Nos.~X.3.55--57]{DellacherieMeyer.88}), whose use in the framework of model uncertainty was first introduced in~\cite{Nutz.11int}. In a nutshell, a medial limit is a Banach limit that preserves (universal) measurability and commutes with integration.
Medial limits are usually constructed by a transfinite induction that uses the Continuum Hypothesis (the axiom that $\aleph_1=\card \R$).  We recall that the Continuum Hypothesis is independent of ZFC. In fact, it is known that medial limits exist under significantly weaker hypotheses (e.g., Martin's Axiom, which is compatible with the negation of the Continuum Hypothesis), cf.~\cite[538S]{Fremlin.08}, but not under ZFC alone \cite{Larson.09}. In the remainder of the paper, we assume that medial limits exist\footnote{In fact, we see little reason not to follow the advice of Dellacherie and Meyer~\cite{DellacherieMeyer.78} and ``adopt the Continuum Hypothesis with the same standing as the Axiom of Choice.''}. Moreover, since there are then many medial limits, we choose one and denote it by $\limmed$. It works as follows.

If $\{x_n\}_{n\geq1}$ is a bounded sequence of real numbers, $\limmed x_n$ is a number between $\liminf x_n$ and $\limsup x_n$, and if $\{f_n\}_{n\geq1}$ is a uniformly bounded sequence of random variables on a measurable space $(\Omega,\cF)$, $f=\limmed f_n$ is defined via $f(\omega):=\limmed f_n(\omega)$. The first important property of the medial limit is that $f$ is then universally measurable; i.e., measurable with respect to the $\sigma$-field
\[
  \cF^*:= \bigcap_{P\in M_1(\Omega,\cF)} \cF\vee \cN^P,
\]
where $M_1(\Omega,\cF)$ is the set of all probability measures on $\cF$ and $\cN^P$ is the collection of $P$-nullsets. In particular, if $\cF$ is universally complete (i.e., $\cF=\cF^*$), then the medial limit preserves $\cF$-measurability. The second important property is that $\limmed$ commutes with integration; that is,
\[
  \int (\limmed f_n)\,d\mu = \limmed \int f_n \,d\mu
\]
whenever $\mu$ is a finite (possibly signed) measure on $(\Omega,\cF)$. We refer to \cite[Theorem~2]{Meyer.73} for these results. The medial limit can be extended to nonnegative sequences $\{x_n\}_{n\geq1}$ via
\[
  \limmed x_n := \sup_{m\in \N} \limmed (x_n\wedge m) \in [0,\infty].
\]
If $\{x_n\}_{n\geq1}$ is a general sequence, we set
\[
  \limmed x_n :=\limmed x_n^+ - \limmed x_n^-
\]
provided that the limits on the right-hand side are finite (or, equivalently, that $\limmed |x_n|<\infty$).
The following properties are consequences of the fact that $\limmed$ commutes with integration; cf.\ the proofs of \cite[Theorems~3,4]{Meyer.73}.

\begin{lemma}\label{le:Meyer}
  Let $\{f_n\}_{n\geq1}$ be a sequence of random variables on $(\Omega,\cF)$ and set
  \[
    f:=
  \begin{cases}
    \limmed f_n & \text{if } \limmed |f_n|<\infty, \\
    +\infty & \text{otherwise}.
  \end{cases}
  \]
  Moreover, let $\mu$ be a finite signed measure on $(\Omega,\cF)$.
  \begin{enumerate}[topsep=3pt, partopsep=0pt, itemsep=1pt,parsep=2pt]
    \item We have $\int |f|\,d|\mu| \leq \sup_n \int |f_n|\,d|\mu|$.

    \item If $\{f_n\}_{n\geq1}$ is $\mu$-uniformly integrable, then $f$ is $\mu$-integrable and we have
     $\int f\,d\mu = \limmed \int f_n\,d\mu$.

     \item If $\{f_n\}_{n\geq1}$ converges in measure $\mu$ to some $\mu$-a.e.\ finite random variable $g$, then
      $f=g$ $\mu$-a.e.
   \end{enumerate}
\end{lemma}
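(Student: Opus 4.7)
The overall plan is to reduce every claim to the case of uniformly bounded sequences, for which the defining identity $\int(\limmed g_n)\,d\nu=\limmed\int g_n\,d\nu$ is directly available, and then to pass monotone limits in a truncation level $m$ using the extensions of $\limmed$ from Section~\ref{se:medial}. Two preliminary observations do most of the work. First, since at each $\omega$ exactly one of $f_n^\pm$ is nonzero, $|f_n|\wedge m=f_n^+\wedge m+f_n^-\wedge m$, so linearity of $\limmed$ on bounded sequences and monotone passage to the sup in $m$ give $\limmed|f_n|=\limmed f_n^++\limmed f_n^-$. Second, $|f|\le\limmed|f_n|$ pointwise in the extended sense: both sides equal $+\infty$ on $\{\limmed|f_n|=\infty\}$, while on the complement $|f|=|\limmed f_n^+-\limmed f_n^-|\le\limmed f_n^++\limmed f_n^-$.

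For (i), fix $m$ and apply the commutation identity to the uniformly bounded sequence $(|f_n|\wedge m)_n$ and the finite positive measure $|\mu|$, obtaining $\int\limmed(|f_n|\wedge m)\,d|\mu|\le\sup_n\int|f_n|\,d|\mu|$. Letting $m\to\infty$ and invoking monotone convergence together with $\limmed|f_n|=\sup_m\limmed(|f_n|\wedge m)$ and the pointwise bound $|f|\le\limmed|f_n|$ from the previous paragraph yields the desired inequality.

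For (ii), uniform integrability gives $\sup_n\int|f_n|\,d|\mu|<\infty$, so $f$ is $\mu$-integrable by (i); in particular $\{\limmed|f_n|=\infty\}$ is $|\mu|$-null. Let $f_n^m=(-m)\vee f_n\wedge m$, so that $\eps_m:=\sup_n\int|f_n-f_n^m|\,d|\mu|\to 0$ by UI. For each $m$ the commutation identity gives $\int\limmed f_n^m\,d\mu=\limmed\int f_n^m\,d\mu$. The uniform bound $\bigl|\int f_n\,d\mu-\int f_n^m\,d\mu\bigr|\le\eps_m$ together with the elementary estimate $|\limmed a_n-\limmed b_n|\le\sup_n|a_n-b_n|$ for bounded real sequences yields $\limmed\int f_n^m\,d\mu\to\limmed\int f_n\,d\mu$ as $m\to\infty$. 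On the other side, applying (i) to the sequence $(f_n-f_n^m)_n$ with $m$ fixed, and using the identity $\limmed f_n-\limmed f_n^m=\limmed(f_n-f_n^m)$ (valid $|\mu|$-a.e.\ because both limits are finite there), produces $\int|f-\limmed f_n^m|\,d|\mu|\le\eps_m\to 0$. Comparing the two limits closes the argument.

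For (iii), the main obstacle is that $\limmed$ does not respect subsequences, so the familiar trick of passing to a $\mu$-a.e.\ convergent subsequence is unavailable and the argument must be carried out on the full sequence through the bounded-truncation identity. The plan is to treat positive and negative parts separately. By continuous mapping, $f_n^\pm\to g^\pm$ in $\mu$-measure. For fixed $m\in\N$ the sequence $(f_n^+\wedge m)_n$ is bounded by $m$ and converges to $g^+\wedge m$ in $\mu$-measure, so bounded convergence gives $\int h(f_n^+\wedge m)\,d\mu\to\int h(g^+\wedge m)\,d\mu$ for every bounded measurable $h$. Applying the commutation identity to the finite signed measure $d\mu':=h\,d\mu$ identifies this limit with $\int h\,\limmed(f_n^+\wedge m)\,d\mu$; varying $h$ over indicators of measurable sets yields $\limmed(f_n^+\wedge m)=g^+\wedge m$ $\mu$-a.e., and taking the sup over $m$ in the nonnegative definition of $\limmed$ gives $\limmed f_n^+=g^+$ $\mu$-a.e. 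Symmetrically $\limmed f_n^-=g^-$ $\mu$-a.e. By the first paragraph, $\limmed|f_n|=\limmed f_n^++\limmed f_n^-=|g|<\infty$ $\mu$-a.e., so $\{\limmed|f_n|<\infty\}$ has full $\mu$-measure, and on it $f=\limmed f_n^+-\limmed f_n^-=g$, completing the proof.
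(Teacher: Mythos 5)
Your proof is correct, and it follows exactly the route the paper indicates: the paper gives no argument of its own for Lemma~\ref{le:Meyer}, merely noting that the three properties are ``consequences of the fact that $\limmed$ commutes with integration'' and citing Meyer, and your write-up is a careful self-contained working-out of precisely that, reducing each claim to uniformly bounded truncations and passing to the supremum in the truncation level. The only step you assert without detail, the additivity $\limmed(a_n+b_n)=\limmed a_n+\limmed b_n$ for sequences with finite $\limmed|a_n|$ and $\limmed|b_n|$, is indeed elementary (sandwich $(u_n+v_n)\wedge m$ between $u_n\wedge m+v_n\wedge m$ and $(u_n+v_n)\wedge 2m$ for the nonnegative case, then rearrange positive and negative parts), so no gap remains.
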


\section{The Space $\fL^1$}\label{se:fL1}

Let $\cP$ be a collection of probability measures on a measurable space $(\Omega,\cF)$.
A subset $A\subseteq \Omega$ is called \emph{$\cP$-polar} if $A\subseteq A'$ for some $A'\in\cF$ satisfying $P(A')=0$ for all $P\in\cP$ and a property is said to hold \emph{$\cP$-quasi surely} or \emph{$\cP$-q.s.\ }if it holds outside a $\cP$-polar set.
Consider the set of $\cF$-measurable, real-valued functions on $\Omega$ and identify any two functions which coincide $\cP$-q.s. We denote by $\fL^0=\fL^0(\Omega,\cF,\cP)$ the set of all such equivalence classes; in the sequel, we shall often not distinguish between these classes and actual functions.

\begin{definition}\label{def:frakL1}
  The space $\fL^1(\Omega,\cF,\cP)$ consists of all $f\in \fL^0(\Omega,\cF,\cP)$ such that $\|f\|_{L^1(P)}:=E_P[|f|]<\infty$ for all $P\in\cP$. We equip $\fL^1(\Omega,\cF,\cP)$ with the Hausdorff, locally convex vector topology induced by the family of seminorms $\{\|\,\cdot\,\|_{L^1(P)}:\,P\in\cP\}$.
\end{definition}

To wit, a net $\{f_\lambda\}$ in $\fL^1=\fL^1(\Omega,\cF,\cP)$ converges to $f\in\fL^1$ in $\fL^1$ (i.e., in the topology of $\fL^1$) if and only if
$E_P[|f_\lambda-f|]\to0$ for all $P\in\cP$; that is, if convergence holds in each of the spaces $L^1(P)$.
It is important that the closedness of a set in $\fL^1$ is \emph{not determined by sequences} in general (cf.\ Example~\ref{ex:notMazur}); i.e., we have to distinguish sequential closedness and topological closedness. This is at the heart of certain difficulties that we have encountered in our study; for instance, it is the reason why the problem mentioned in Remark~\ref{rk:strongDualityQuestion}(ii) is nontrivial.

The space $\fL^1(\Omega,\cF,\cQ)$ is defined similarly when $\cQ$ is a family of finite, possibly signed measures.
In accordance with the usual notion of boundedness in a topological vector space, we shall say that a subset $\Theta\subseteq \fL^1(\Omega,\cF,\cQ)$ is \emph{bounded} if
\[
  \sup_{f\in \Theta} \|f\|_{L^1(Q)}<\infty\quad\mbox{for all}\quad Q\in\cQ.
\]

The following is easily deduced from Lemma~\ref{le:Meyer}.

\begin{lemma}\label{le:medialBounded}
  Let $\cF$ be universally complete and let $\{f_n\}_{n\geq1}$ be a bounded sequence in $\fL^1(\Omega,\cF,\cQ)$. Then
  \[
    \{\limmed |f_n|=\infty\}\quad \mbox{is $\cQ$-polar}
  \]
  and $f:=\limmed f_n$ defines an element of $\fL^1(\Omega,\cF,\cQ)$ satisfying
  \[
    \|f\|_{L^1(Q)}\leq \sup_n \|f_n\|_{L^1(Q)}\quad \mbox{for all}\quad Q\in\cQ.
  \]
  Moreover, if $\{f_n\}_{n\geq1}$ has a limit $g$ in $\fL^1(\Omega,\cF,\cQ)$, then $f=g$ $\cQ$-q.s.
\end{lemma}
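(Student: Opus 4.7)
The plan is to reduce all three assertions to Lemma~\ref{le:Meyer} through a straightforward truncation argument, with universal completeness of $\cF$ used only to ensure measurability of the exceptional set.

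For the first assertion I would fix $Q\in\cQ$ and apply Lemma~\ref{le:Meyer}(i) (or equivalently (ii)) to the uniformly bounded sequence $\{|f_n|\wedge m\}_{n\geq1}$ with $\mu=Q$, obtaining
\[
  \int \limmed(|f_n|\wedge m)\,d|Q| \leq \sup_n \int (|f_n|\wedge m)\,d|Q| \leq \sup_n \|f_n\|_{L^1(Q)}.
\]
By the extension $\limmed |f_n|=\sup_{m}\limmed(|f_n|\wedge m)$ and monotone convergence, this gives $\int \limmed|f_n|\,d|Q|\leq \sup_n \|f_n\|_{L^1(Q)}<\infty$, so $\{\limmed|f_n|=\infty\}$ is $|Q|$-null. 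Universal completeness of $\cF$ is then the key: each $\limmed(|f_n|\wedge m)$ is $\cF^*$-measurable hence $\cF$-measurable, so the countable supremum $\limmed|f_n|$ is $\cF$-measurable and $\{\limmed|f_n|=\infty\}\in\cF$. Since this set has $|Q|$-measure zero for every $Q\in\cQ$, it is $\cQ$-polar.

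For the second assertion, on the complement of this $\cQ$-polar set both $\limmed f_n^+$ and $\limmed f_n^-$ are dominated by $\limmed|f_n|<\infty$, so $f:=\limmed f_n^+-\limmed f_n^-$ is well-defined and $\cF$-measurable there; I would set $f:=0$ on the polar set. The norm estimate $\|f\|_{L^1(Q)}\leq \sup_n \|f_n\|_{L^1(Q)}$ is then a direct invocation of Lemma~\ref{le:Meyer}(i) with $\mu=Q$, since $|f|\leq \limmed|f_n|$ $\cQ$-q.s. and the polar set does not contribute to $\int|f|\,d|Q|$. For the third assertion, convergence of $f_n$ to some $g$ in $\fL^1$ means convergence in $L^1(Q)$ for every $Q\in\cQ$, which implies convergence in $|Q|$-measure. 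Since $g$ is real-valued and hence $|Q|$-a.e.\ finite, Lemma~\ref{le:Meyer}(iii) identifies $f=g$ $|Q|$-a.e.; letting $Q$ range over $\cQ$ gives $f=g$ $\cQ$-q.s.

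There is no deep obstacle here; the only subtlety is bookkeeping around the extension of $\limmed$ from bounded to nonnegative (and then general) sequences via truncation, together with the observation that universal completeness of $\cF$ is exactly what is needed to promote the $\cF^*$-measurability coming from $\limmed$ to genuine $\cF$-measurability of the exceptional set, without which the phrase ``$\cQ$-polar'' (a property of a set in $\cF$) would not apply.
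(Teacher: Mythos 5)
Your argument is correct and is precisely the deduction from Lemma~\ref{le:Meyer} that the paper has in mind (the paper gives no explicit proof, stating only that the lemma ``is easily deduced from Lemma~\ref{le:Meyer}''): part (i) applied measure by measure yields the polar set and the norm bound, part (iii) yields the identification with the $\fL^1$-limit, and universal completeness upgrades the $\cF^*$-measurability of the medial limit to $\cF$-measurability. Your bookkeeping on the truncation extension of $\limmed$ and on modifying $f$ on the polar set (harmless, since elements of $\fL^1$ are equivalence classes modulo $\cQ$-polar sets) is accurate.
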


\section{Sequential Closedness of $\fC\subseteq\fL^1$}\label{se:seqClosedness}

Let $(\Omega,\cF)$ be a measurable space equipped with a filtration $(\cF_t)_{t\in\{0,1,\dots,T\}}$, where $T\in \N$.
We shall assume throughout that
\[
  \cF_t\mbox{ is universally complete, for all $t$.}
\]
Moreover, let $S$ be a scalar adapted process, the \emph{stock price process}. We consider a nonempty set $\cP$ of \emph{martingales measures} for $S$; i.e., probability measures under which $S$ is a martingale. We shall denote by $\cH$ the set of predictable processes, the \emph{trading strategies}. Given $H\in\cH$, the corresponding \emph{wealth process} (from initial capital zero) is the discrete-time integral process
\[
  H\sint S=(H\sint S_t)_{t\in\{0,1,\dots,T\}},\quad H\sint S_t=\sum_{u=1}^t H_u \Delta S_u,
\]
where $\Delta S_u=S_u-S_{u-1}$ is the price increment.

The main result of this section is that the cone $\fC$ of all claims which can be superreplicated from initial capital $x=0$ is sequentially closed in $\fL^1=\fL^1(\Omega,\cF,\cP)$. We denote by $\fL^0_+$ the set of ($\cP$-q.s.)\ nonnegative random variables.

\begin{theorem}\label{th:closedness}
  Let $\cP\neq\emptyset$ be a set of martingale measures for $S$ and
  \[
    \fC:=\big(\{H\sint S_T:\, H\in \cH\} - \fL^0_+\big)\cap \fL^1.
  \]
  Then $\fC$ is sequentially closed in $\fL^1$.
\end{theorem}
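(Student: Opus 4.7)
The key structural observation is that if $f_n \to f$ in $\fL^1$, then Lemma~\ref{le:Meyer}(iii) applied under each $P \in \cP$ yields $f = \limmed f_n$ $\cP$-q.s. Writing $f_n = \sum_{u=1}^T H_n^u \Delta S_u - g_n$ with $H_n$ predictable and $g_n \in \fL^0_+$, my plan is to (a) modify the $H_n^u$ on $\cP$-polar sets so that each $\{H_n^u\}_{n \geq 1}$ becomes uniformly bounded, (b) apply the medial limit componentwise to produce candidate limits $H^u := \limmed H_n^u$ and $g := \limmed g_n$, and (c) verify by linearity of $\limmed$ that $f = \limmed f_n = \sum_u H^u \Delta S_u - g$, with $H$ predictable, $g \geq 0$, and $g \in \fL^1$.

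Once (a) is granted, steps (b) and (c) are nearly formal. By Lemma~\ref{le:medialBounded} and universal completeness of $\cF_{u-1}$, each $H^u$ is $\cF_{u-1}$-measurable, so $H = (H^u)$ is predictable; preservation of non-negativity under $\limmed$ gives $g \geq 0$; and since each $\Delta S_u$ is $\cP$-q.s.\ finite, the medial limit commutes with multiplication by $\Delta S_u$ and with the finite sum over $u$, yielding $\limmed f_n = \sum_u H^u \Delta S_u - g$ pointwise $\cP$-q.s. The integrability $g \in \fL^1$ follows from the $\fL^1$-boundedness of $(f_n)$ combined with the uniform boundedness of the $H_n^u$ and $\Delta S_u \in L^1(P)$ for each $P \in \cP$.

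Step~(a), the reduction to bounded strategies, is the main obstacle and I would carry it out by induction on $T$, handling one period at a time. For the first period, partition $\Omega$ into the $\cF_0$-measurable sets $A := \{\liminf_n |H_n^1| < \infty\}$ and $A^c$. On $A^c$, rescale $\tilde H_n^1 := H_n^1/(1+|H_n^1|)$ and divide the wealth equation through by $1+|H_n^1|$; applying the medial limit to the bounded rescaled sequence yields $\tilde H^1 := \limmed \tilde H_n^1$ with $|\tilde H^1| = 1$ on $A^c$. The martingale property of $S$ under each $P \in \cP$ ensures $E_P[\tilde H^1 \Delta S_1] = 0$ (valid because $|\tilde H^1| \leq 1$ and $\Delta S_1 \in L^1(P)$), and combining this with the non-negativity of the rescaled $\tilde g_n$ and the inductive analysis of the rescaled remainder as an element of the $(T-1)$-period problem forces $\tilde H^1 \Delta S_1 = 0$ $\cP$-q.s.\ on $A^c$. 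This justifies the replacement $H_n^1 \leftarrow H_n^1 \mathbf{1}_A$ without affecting the wealth $\cP$-q.s., reducing to $A = \Omega$, where the exhausting family $A_k := \{\liminf_n |H_n^1| \leq k\}$ together with a diagonal argument furnishes uniform boundedness. The same scheme, applied in turn to $u = 2, 3, \ldots, T$, handles the remaining periods. The delicate point---and the main innovation over the classical dominated case---is that the blow-up argument must be performed simultaneously for all $P \in \cP$ via the medial limit, rather than via Koml\'os-type subsequence extractions that are unavailable when $\cP$ is nondominated.
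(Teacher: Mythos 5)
There is a genuine gap, concentrated in your step (a). First, the goal of that step --- modifying the strategies on $\cP$-polar sets so that each $\{H_n^u\}_{n\geq1}$ becomes \emph{uniformly bounded} --- is not attainable in general, even when $\cP$ is a singleton: the set where $\sup_n|H_n^u|=\infty$ need not be polar (the classical argument only yields a.s.\ finiteness of $\liminf_n|H_n^u|$ and then extracts \emph{$\omega$-dependent} subsequences). Your exhaustion $A_k=\{\liminf_n|H_n^1|\le k\}$ does not repair this: on $A_k$ the subsequence along which $|H_n^1(\omega)|\le k+1$ depends on $\omega$, so no diagonal argument produces a single subsequence that is uniformly bounded, and the medial limit, being applied to the full sequence, does not let you discard the large terms. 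Second, the blow-up argument on $A^c$ is flawed: the medial limit is a shift-invariant Banach limit, so for instance $\limmed\big((-1)^n\big)=0$, and in general $|\limmed \tilde H_n^1|$ can be strictly smaller than $\limmed|\tilde H_n^1|=1$. Hence the key assertion $|\tilde H^1|=1$ on $A^c$, which is supposed to drive the reduction, is unjustified; this is exactly the step that in the dominated theory is handled by measurable selection of random subsequences, a tool that does not aggregate across a nondominated $\cP$.

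The paper avoids both problems by never seeking pointwise bounds on the integrands. It first proves (Lemma~\ref{le:fundLemma1}) that $\fL^1$-boundedness of $\{W^n\}$ forces $\{H_t^n\Delta S_t\}_{n\geq1}$ to be bounded in $\fL^1$ for each $t$; this is where the martingale property of every $P\in\cP$ enters, via $E_P[(H^n\sint S_T)^+]=E_P[(H^n\sint S_T)^-]$ and the $L^1(P)$-contractivity of conditional expectation. It then reads this as boundedness of $\{H_t^n\}$ in $\fL^1(\Omega,\cF_t,\cQ)$ for the finite signed measures $dQ_{t,P}=\Delta S_t\,dP$, so that Lemma~\ref{le:medialBounded} makes $H_t:=\limmed H_t^n$ finite outside a $\cQ$-polar set; on that exceptional set $\Delta S_t$ vanishes $\cP$-q.s., so setting $H_t=0$ there is harmless. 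Your steps (b) and (c) then essentially coincide with the paper's conclusion ($W=\limmed W^n$, $K=\limmed K^n\ge0$), but they only become available once the boundedness is obtained in this weighted sense rather than pointwise. To salvage your write-up, replace step (a) by the martingale/contraction argument of Lemma~\ref{le:fundLemma1} and the passage to the measures $Q_{t,P}$.
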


Before stating the proof of the theorem, we show the following ``compactness'' property; it should be seen as a consequence of the ``absence of arbitrage'' which is implicit in our setup because $\cP$ consists of martingale measures.

\begin{lemma}\label{le:fundLemma1}
  Let $\{W^n= H^n\sint S_T - K^n\}_{n\geq1}\subseteq\fC$ be a sequence which is bounded in $\fL^1$. Then
  for all $t\in\{1,\dots,T\}$,
  \[
    \{H_t^n \Delta S_t\}_{n\geq1}\quad\mbox{is bounded in $\fL^1$.}
  \]
\end{lemma}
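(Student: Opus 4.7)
The plan is to fix an arbitrary $P\in\cP$ and establish $\sup_n E_P[|H^n_t\Delta S_t|]<\infty$ for each $t$; since these are exactly the seminorms defining $\fL^1$, this suffices. Setting $C_P:=\sup_n \|W^n\|_{L^1(P)}<\infty$ and using $K^n\in\fL^0_+$, we immediately get $H^n\sint S_T = W^n+K^n\geq W^n$, whence $(H^n\sint S_T)^-\leq (W^n)^-$ and $\sup_n E_P[(H^n\sint S_T)^-]\leq C_P$. The remainder of the argument will convert this one-sided $L^1(P)$-bound at the terminal time into a two-sided bound at every intermediate time by exploiting the martingale property of $S$ under $P$.

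The hard part, and what I expect to be the main technical obstacle, is the following discrete-time martingale-transform lemma, which I would isolate first: \emph{if $S$ is a $P$-martingale and $H\in\cH$ satisfies $(H\sint S_T)^-\in L^1(P)$, then $H\sint S$ is itself a $P$-martingale; in particular $H\sint S_t\in L^1(P)$ for every $t$.} The proof would be a backward induction on $t$. Since $H_t$ is $\cF_{t-1}$-measurable and $\Delta S_t\in L^1(P)$, pulling out gives $E_P[|H_t\Delta S_t|\mid\cF_{t-1}]=|H_t|\,E_P[|\Delta S_t|\mid\cF_{t-1}]<\infty$ a.s., so $H_t\Delta S_t$ is conditionally integrable and $E_P[H_t\Delta S_t\mid\cF_{t-1}]=H_t\,E_P[\Delta S_t\mid\cF_{t-1}]=0$ a.s. Writing $M:=H\sint S$, this gives $M_{t-1}=E_P[M_t\mid\cF_{t-1}]$ a.s., and the pointwise bound $M_{t-1}^-\leq E_P[M_t^-\mid\cF_{t-1}]$ propagates $M_t^-\in L^1(P)$ down to $M_{t-1}^-\in L^1(P)$. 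Finally, the tower property yields $E_P[M_t]=E_P[M_0]=0$ for every $t$, which together with $E_P[M_t^-]<\infty$ forces $E_P[M_t^+]<\infty$ and closes the induction.

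With this lemma applied to each $H^n$, we obtain $E_P[H^n\sint S_T]=0$ and hence $E_P[|H^n\sint S_T|]=2E_P[(H^n\sint S_T)^-]\leq 2C_P$. Conditional Jensen's inequality for the convex map $|\cdot|$ then gives $E_P[|H^n\sint S_t|]\leq E_P[|H^n\sint S_T|]\leq 2C_P$ for every $t$, and the triangle inequality $|H^n_t\Delta S_t|\leq |H^n\sint S_t|+|H^n\sint S_{t-1}|$ yields $\sup_n E_P[|H^n_t\Delta S_t|]\leq 4C_P<\infty$. Since $P\in\cP$ was arbitrary, $\{H^n_t\Delta S_t\}_{n\geq1}$ is bounded in $\fL^1$.
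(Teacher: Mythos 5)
Your proof is correct and follows essentially the same route as the paper: bound the negative parts of $H^n\sint S_T$ via the nonnegativity of $K^n$, upgrade $H^n\sint S$ to a true $P$-martingale, deduce $E_P[(H^n\sint S_T)^+]=E_P[(H^n\sint S_T)^-]$, and pass to earlier times by the contraction property of conditional expectation. The only difference is that you prove the key martingale-transform fact (a generalized $P$-martingale transform with $(H\sint S_T)^-\in L^1(P)$ is a true martingale) by hand via generalized conditional expectations, whereas the paper simply cites Theorems~1 and~2 of Jacod and Shiryaev for this step.
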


\begin{proof}
  It suffices to show that for each $t\in\{1,\dots,T\}$,
  \begin{equation}\label{eq:proofClosedBddnessInL1}
    \{H^n\sint S_t\}_{n\geq1}\quad\mbox{is bounded in $\fL^1$.}
  \end{equation}
  Since $\{W^n\}_{n\geq1}$ is bounded in $\fL^1$ and $K^n$ is nonnegative, it follows immediately that $\{(H^n\sint S_T)^-\}_{n\geq1}$ is also bounded in $\fL^1$. Now fix $n$ and $P\in\cP$ and recall that $P$ is a martingale measure for $S$. Therefore, the stochastic integral $H^n\sint S$ is a local $P$-martingale, but since we already know that $E_P[(H^n\sint S_T)^-]<\infty$, we even have that $H^n\sint S$ is a true martingale; cf.\ \cite[Theorems~1,\,2]{JacodShiryaev.98}. As a result, $E_P[(H^n\sint S_T)^+]=E_P[(H^n\sint S_T)^-]$ for all $n$ and $P$, and therefore, $\{(H^n\sint S_T)^+\}_{n\geq1}$ is bounded in $\fL^1$, like the sequence of negative parts. So far, we have shown that
  \begin{equation}\label{eq:proofClosedBddnessAtT}
    \{H^n\sint S_T\}_{n\geq1}\quad\mbox{is bounded in $\fL^1$.}
  \end{equation}
  To obtain the same statement for $t<T$, we note that for every $P\in\cP$, the martingale property of $H^n\sint S$ yields that
  \[
    \|H^n\sint S_t\|_{L^1(P)} = \|E_P[H^n\sint S_T|\cF_t]\|_{L^1(P)}\leq \|H^n\sint S_T\|_{L^1(P)}
  \]
  since the conditional expectation is a contraction on $L^1(P)$. Hence, \eqref{eq:proofClosedBddnessAtT} implies the claim~\eqref{eq:proofClosedBddnessInL1}.
\end{proof}

\begin{proof}[Proof of Theorem~\ref{th:closedness}.] %
  Let
  $
    W^n= H^n\sint S_T - K^n
  $
  be a sequence in $\fC$ which converges to some $W\in \fL^1$; we need to find $H\in\cH$ such that
  $W-H\sint S_T\leq 0$ $\cP$-q.s. Indeed, being convergent in $\fL^1$, the sequence $\{W^n\}_{n\geq1}$ is necessarily bounded in $\fL^1$; hence, by Lemma~\ref{le:fundLemma1},
  $\{H^n_t\Delta S_t\}_{n\geq1}$ is bounded in $\fL^1$ for fixed $t\in\{1,\dots,T\}$.
  As $S$ is a martingale and in particular integrable under each $P\in\cP$, we can define the \emph{finite} signed measures $Q_{t,P}$ by
  \[
   dQ_{t,P}/dP=\Delta S_t.
  \]
  Let $\cQ=\{Q_{t,P}\}_{P\in\cP}$, then the above means that the sequence $\{H^n_t\}_{n\geq 1}$ is bounded in $\fL^1(\Omega,\cF_t,\cQ)$. Thus, Lemma~\ref{le:medialBounded} implies that $H_t:=\limmed H^n_t$ is finite $\cQ$-q.s. Setting
  $H_t=0$ on the set $\{\limmed |H^n_t|=+\infty\}\in\cF_t$, we obtain a process $H\in\cH$.
  It remains to check that $K:=H\sint S_T -W$ is nonnegative $\cP$-q.s.
  Indeed, since $W^n\to W$ in $\fL^1$, we know from Lemma~\ref{le:medialBounded} that $W=\limmed W^n$ $\cP$-q.s.
  In view of
  \[
    H\sint S_T = \sum_{t=1}^T (\limmed H^n_t) \Delta S_t = \limmed \sum_{t=1}^T H^n_t \Delta S_t = \limmed (H^n\sint S_T)\quad \mbox{$\cP$-q.s.},
  \]
  we conclude that
  \[
    K = H\sint S_T - W = \limmed (H^n\sint S_T -W_n) = \limmed K^n\quad \cP\mbox{-q.s.}
  \]
  As each $K^n$ is nonnegative, the result follows.
\end{proof}

\section{Main Result}\label{se:existence}

In this section, we show that optimal superreplicating strategies exist and we characterize the minimal superreplication price in a dual way.
As in the preceding section, $\cP$ is a nonempty set of martingale measures for the scalar process $S$ which is adapted to the universally complete filtration~$(\cF_t)$.

\begin{definition}
  We introduce the normed vector space
  \[
    \bL^1=\{f\in \fL^1:\, \|f\|_1<\infty\}, \quad \mbox{where} \quad \|f\|_1 = \sup_{P\in\cP} E_P[|f|].
  \]
\end{definition}

We remark that if a nonnegative claim $f\in\fL^0$ can be superreplicated from some finite (deterministic) initial capital, then necessarily $f\in\bL^1$.

\begin{definition}
  We introduce the cone
  \[
    \bC:= \fC\cap \bL^1 \equiv\big(\{H\sint S_T:\, H\in \cH\} - \fL^0_+\big)\cap \bL^1,
  \]
  as well as the set of continuous linear pricing functionals,
  \[
    \Pi=\big\{\ell\in(\bL^1)^*:\, \ell(\bC)\subseteq \R_-\mbox{ and }\ell(1)=1\big\}.
  \]
\end{definition}

Note that $\Pi$ is indeed the set of all continuous and linear pricing mechanisms which are consistent with obvious no-arbitrage considerations. As $\bC$ contains the nonpositive elements of $\bL^1$, we see that $\ell(\bC)\subseteq \R_-$ implies that $\ell$ is positive; i.e., $\ell(f)\geq0$ whenever $f\geq0$ $\cP$-q.s.

It is obvious that $\bL^1\subseteq\fL^1$ and that the topology of $\bL^1$ is stronger than the one of $\fL^1$. Since sequential closedness and topological closedness are equivalent in a normed space (which is indeed the reason for moving from $\fL^1$ to $\bL^1$), the following is then an immediate consequence of Theorem~\ref{th:closedness}.

\begin{corollary}\label{co:closednessWeak}
  Let $\cP\neq\emptyset$ be a set of martingale measures for $S$. Then $\bC$ is closed in $\bL^1$.
\end{corollary}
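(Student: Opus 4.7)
The plan is to exploit the two observations the text already emphasizes: that $\bL^1$ is a normed space, so sequential closedness and topological closedness coincide, and that the $\bL^1$-topology is stronger than the $\fL^1$-topology. Together these reduce the corollary to a direct application of Theorem~\ref{th:closedness}.

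Concretely, I would argue as follows. Let $\{W^n\}_{n\geq 1}\subseteq\bC$ be a sequence converging in $\bL^1$ to some $W\in\bL^1$; since $\bL^1$ is a normed space, showing $W\in\bC$ is enough to conclude that $\bC$ is closed. For every $P\in\cP$ one has the trivial bound
\[
  E_P[|W^n-W|] \leq \sup_{Q\in\cP} E_Q[|W^n-W|] = \|W^n-W\|_1,
\]
so $W^n\to W$ in the seminorm $\|\cdot\|_{L^1(P)}$ for each $P\in\cP$. By the definition of the locally convex topology on $\fL^1$, this means $W^n\to W$ in $\fL^1$. Because $\bC=\fC\cap\bL^1\subseteq\fC$, the sequence $\{W^n\}$ also lies in $\fC$, and Theorem~\ref{th:closedness} (sequential closedness of $\fC$ in $\fL^1$) yields $W\in\fC$. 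Since in addition $W\in\bL^1$ by assumption, we conclude $W\in\fC\cap\bL^1=\bC$, as required.

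There is no real obstacle here; the corollary is a packaging statement. The only thing worth checking carefully is the comparison of topologies, i.e.\ that $\bL^1\hookrightarrow\fL^1$ is continuous, which is immediate from the definition of $\|\cdot\|_1$ as the supremum of the seminorms $\|\cdot\|_{L^1(P)}$. Everything else, including the equivalence of sequential and topological closedness, is a standard property of normed spaces and requires no further argument.
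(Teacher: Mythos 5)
Your proof is correct and follows exactly the route the paper intends: the paper itself states that the corollary is an immediate consequence of Theorem~\ref{th:closedness} because the norm topology of $\bL^1$ is stronger than the topology of $\fL^1$ and sequential closedness coincides with closedness in a normed space. Your write-up merely spells out these two observations explicitly, so there is nothing to add.
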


The following is our main result: an optimal superreplicating strategy exists and the minimal superreplication price is given by the supremum over all linear prices.

\begin{theorem}\label{th:existence}
  Let $\cP\neq\emptyset$ be a set of martingale measures for $S$ and let $f\in \bL^1$. Then
  \begin{equation}\label{eq:dualityWeak}
    \sup_{\ell\in \Pi} \ell(f) = \inf\big\{x\in\R:\,\exists\, H\in \cH\mbox{ such that } x+ H\sint S_T\geq f\;\cP\mbox{-q.s.}\big\}
  \end{equation}
   and the infimum is attained whenever it is not equal to $+\infty$. %
\end{theorem}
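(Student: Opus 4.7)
Denote the right-hand and left-hand sides of~(\ref{eq:dualityWeak}) by $x_*(f)$ and $d(f)$, respectively. The plan is to follow the classical Hahn--Banach template, exploiting that $\bC$ is a closed convex cone in the Banach space $\bL^1$ by Corollary~\ref{co:closednessWeak}; the $\cP$-uniform integrability of $f^+$ will enter at a single decisive point. Weak duality $d(f)\leq x_*(f)$ is immediate: any $x$ with $x+H\sint S_T\geq f$ $\cP$-q.s.\ makes $f-x=H\sint S_T-K$ an element of $\bC$ (with $K:=x+H\sint S_T-f\geq 0$), so every $\ell\in\Pi$ gives $\ell(f)-x=\ell(f-x)\leq 0$. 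To see that $\Pi$ is nonempty, I would fix $P\in\cP$ and consider $\ell_P(g):=E_P[g]$, which lies in $(\bL^1)^*$ with $\ell_P(1)=1$; for $g=H\sint S_T-K\in\bC$ one has $(H\sint S_T)^-\leq g^-$ since $K\geq 0$, so $E_P[(H\sint S_T)^-]<\infty$, and the martingale argument from the proof of Lemma~\ref{le:fundLemma1} upgrades $H\sint S$ to a true $P$-martingale, yielding $\ell_P(g)=-E_P[K]\leq 0$.

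For the reverse inequality, assume $d(f)<\infty$ and fix $x>d(f)$; the goal is $f-x\in\bC$, from which $x_*(f)\leq x$ and hence (letting $x\downarrow d(f)$) $x_*(f)\leq d(f)$. Suppose for contradiction that $f-x\notin\bC$. Since $\bC$ is a closed convex cone in the Banach space $\bL^1$, Hahn--Banach separation produces $\ell\in(\bL^1)^*$ with $\ell(f-x)>0$ and $\ell\leq 0$ on $\bC$ (the cone property collapses the separating constant to zero). Because $\bC$ contains every $\cP$-q.s.\ nonpositive element of $\bL^1$, $\ell$ is a positive linear functional. If $\ell(1)>0$, normalize: $\tilde\ell:=\ell/\ell(1)$ lies in $\Pi$ and satisfies $\tilde\ell(f)>x>d(f)$, contradicting the definition of $d(f)$.

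The main obstacle is the residual case $\ell(1)=0$, where rescaling fails to produce a pricing functional. This is precisely where the $\cP$-uniform integrability of $f^+$ is used: positivity of $\ell$ combined with $\ell(1)=0$ forces $\ell$ to annihilate every bounded measurable function (since for $g$ bounded, $0\leq\ell(\|g\|_\infty\pm g)\leq\|g\|_\infty\ell(1)=0$), and the $\bL^1$-approximation $f^+\wedge n\to f^+$ granted by uniform integrability, together with continuity of $\ell$, gives $\ell(f^+)=\lim_n\ell(f^+\wedge n)=0$; as $\ell(f^-)\geq 0$ by positivity, we obtain $\ell(f-x)=\ell(f)=-\ell(f^-)\leq 0$, contradicting $\ell(f-x)>0$. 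Hence $f-x\in\bC$ for every $x>d(f)$, establishing $x_*(f)=d(f)$.

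Finally, for attainment when $x_*(f)<\infty$, I would set $x_n:=x_*(f)+1/n$: by the previous step $f-x_n\in\bC$, and $f-x_n\to f-x_*(f)$ in $\bL^1$, so the closedness of $\bC$ yields $f-x_*(f)=H\sint S_T-K$ for some $H\in\cH$ and $K\in\fL^0_+$, i.e., $x_*(f)+H\sint S_T\geq f$ $\cP$-q.s., realizing the infimum. Note that the case $x_*(f)=-\infty$ cannot occur because $d(f)\geq\ell_P(f)\geq-\|f\|_1>-\infty$.
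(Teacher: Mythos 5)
Your proof is correct and follows essentially the same route as the paper: weak duality via the martingale property of $H\sint S$, Hahn--Banach separation of $f-x$ from the closed convex cone $\bC$ (Corollary~\ref{co:closednessWeak}), and the $\cP$-uniform integrability of $f^+$ to rule out $\ell(1)=0$ and normalize the separating functional into $\Pi$. The only differences are cosmetic: you make the nonemptiness of $\Pi$ explicit, and you obtain attainment by letting $x_n\downarrow x_*(f)$ and invoking closedness of $\bC$ once more, whereas the paper separates directly at $x=\sup_{\ell\in\Pi}\ell(f)$.
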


We emphasize that a priori, \eqref{eq:dualityWeak} is an identity in $(-\infty,\infty]$, with the usual convention $\inf\emptyset = +\infty$. (As $f\in\bL^1$, the value $-\infty$ is clearly not possible for the left-hand side.)

\begin{proof}
  We first show the inequality ``$\leq$'' in~\eqref{eq:dualityWeak}. Nothing is to be proved if the set on the right-hand side is empty. Hence, let $x\in\R$ and $H\in \cH$ be such that
  \[
    x+ H\sint S_T\geq f\quad\cP\mbox{-q.s.}
  \]
  As $f\in\bL^1$, this implies that
  $(H\sint S_T)^-\leq (f-x)^-\in \bL^1$. On the other hand, as in the proof of Lemma~\ref{le:fundLemma1}, we have $E_P[(H\sint S_T)^+]=E_P[(H\sint S_T)^-]$ for all $P\in\cP$, and so we deduce that $(H\sint S_T)^+\in \bL^1$ as well. As a result, we have that $H\sint S_T\in \bC$. Now let $\ell\in\Pi$; then positivity and the defining properties of $\Pi$ yield that
  \[
    \ell(f)\leq \ell(x+ H\sint S_T)=x+ \ell(H\sint S_T) \leq x,
  \]
  which proves the desired inequality.

  We turn to the inequality ``$\geq$'' in~\eqref{eq:dualityWeak} and the existence of an optimal superreplicating strategy. Let $x:=\sup_{\ell\in \Pi} \ell(f)\in (-\infty,\infty]$. If $x=+\infty$, nothing remains to be shown, so we may assume that $x$ is finite and show that $f\in x+\bC$ (which immediately yields both the inequality and the existence).

  We first consider the case where $f$ is uniformly bounded from above. Suppose for contradiction that $f\notin x+\bC$. Since the convex cone $\bC$ is closed in $\bL^1$ by Corollary~\ref{co:closednessWeak}, the Hahn-Banach theorem yields a continuous functional $\ell: \bL^1\to\R$ such that
  \begin{equation}\label{eq:separationWeak}
    \sup_{W\in\bC} \ell(W) < \ell(f-x)<\infty.
  \end{equation}
  In fact, since $\bC$ is a cone containing zero, $\sup_{W\in\bC} \ell(W)<\infty$ implies that
  \begin{equation}\label{eq:alphaNullWeak}
    \sup_{W\in\bC} \ell(W) =0
  \end{equation}
  and in particular~\eqref{eq:separationWeak} states that
  \begin{equation}\label{eq:alphaNullConseqWeak}
    \sup_{\ell\in \Pi} \ell(f)=x< \ell(f).
  \end{equation}
  Of course, \eqref{eq:alphaNullWeak} shows that $\ell(\bC)\subseteq \R_-$; in particular, $\ell$ is positive. As $f^+$ is bounded, we have $f-x\leq n$ for $n$ large and hence
  \[
    0< \ell(f-x)\leq \limsup_{n\to\infty} \ell(n).
  \]
  This shows that $\ell(1)=n^{-1}\ell(n)>0$. By a normalization, we may assume that $\ell(1)=1$; but then $\ell\in\Pi$, which contradicts~\eqref{eq:alphaNullConseqWeak}.

  It remains to consider the case where $f^+$ may be unbounded. By the above, we have that $(f\wedge n) -x \in \bC$ for all $n\geq0$. Hence, as
  $(f\wedge n) -x \to f-x$ in $\fL^1$, Theorem~\ref{th:closedness} implies that $f-x\in\bC$.
\end{proof}

\begin{remark}\label{rk:strongDualityQuestion}
  (i) It is not hard to see that the theorem is indeed a genuine generalization of the classical superreplication duality mentioned in the Introduction (apart from our assumption that the $\sigma$-fields are universally complete).

  (ii) It is interesting to find conditions guaranteeing that
  \[
    \sup_{\ell\in \Pi} \ell(f) = \sup_{P\in\cP} E_P[f],
  \]
  which, together with~\eqref{eq:dualityWeak}, would yield an even closer analogue of the classical duality. A partial answer (for a special case) can already be found in~\cite{DeparisMartini.04}. The general case is deferred to future work (cf.\ \cite{BouchardNutz.13}).
\end{remark}

\section{A Counterexample}\label{se:counterex}

The subsequent example features a $\sigma$-convex set $\cP$ of martingale measures for the trivial process $S\equiv0$ and shows that a positive, sequentially continuous functional $\ell$ on $\fL^1$ need not be continuous. (Note that $\fC=\fL^1_-$ when $S\equiv0$, so that positivity and $\ell(\fC)\subseteq \R_-$ are equivalent.) In particular, the nullspace of $\ell$ is then a sequentially closed set which is not topologically closed.

\begin{example}\label{ex:notMazur}
  Let $\Omega=[0,1]$, let $\cF$ be its Borel $\sigma$-field and let
  \[
    \cP=\bigg\{ \sum_{k\geq1} \alpha_k \delta_{x_k}:\, \{x_k\}_{k\geq1}\subseteq [0,1],\; 0\leq \alpha_k\leq 1,\; \sum_{k\geq1}\alpha_k=1 \bigg\}.
  \]
  Then the Lebesgue measure $\mu$ induces a sequentially continuous functional on $\fL^1(\Omega,\cF,\cP)$ which is not topologically continuous.
\end{example}

\begin{proof}
  Any $f\in \fL^1$ is bounded, for otherwise there exist $x_n\in[0,1]$ such that
  $|f(x_n)|\geq 2^n$ and therefore $E_P[|f|]=+\infty$ for $P:=\sum_{n\geq1}2^{-n}\delta_{x_n}$, contradicting that $P\in\cP$.
  Moreover, a sequence $f_n$ in $\fL^1$ converges to zero if and only if it is uniformly bounded and converges pointwise; i.e.,
  \begin{equation}\label{eq:exConvExplicit}
    \sup_{n\geq1,\,x\in[0,1]}|f_n(x)|<\infty \quad\quad\mbox{and} \quad\quad  f_n(x)\to0,\quad x\in [0,1].
  \end{equation}
  Indeed,~\eqref{eq:exConvExplicit} implies the convergence in $\fL^1$ by the bounded convergence theorem (applied for each $P\in\cP$). Conversely, let $f_n$ converge to zero in $\fL^1$, then the pointwise convergence must hold since $\delta_x\in\cP$ for all $x\in[0,1]$. Moreover, being convergent in $\fL^1$, $\{f_n\}_{n\geq1}$ must be bounded in $L^1(P)$ for every $P\in\cP$. If $\{f_n\}_{n\geq1}$ is not uniformly bounded, then after passing to a subsequence, there exist $x_n\in[0,1]$ such that
  $|f_n(x_n)|\geq n2^n$. Hence,
  \[
    \sup_{n\geq1} E_P[|f_n|]=+\infty\quad \mbox{for} \quad P:=\sum_{n\geq1}2^{-n}\delta_{x_n},
  \]
  which is again a contraction. Therefore, we have the characterization~\eqref{eq:exConvExplicit} for sequential convergence in $\fL^1$.

  As a consequence, $\ell=E_\mu[\,\cdot\,]$ is a sequentially continuous linear functional on $\fL^1$ for any probability measure $\mu$.
  However, when $\mu$ is the Lebesgue measure, $\ell$ cannot be topologically continuous because otherwise $E_\mu[\,\cdot\,]$ would have to be dominated by finitely many of the seminorms $\{E_P[|\,\cdot\,|],\,P\in\cP\}$, which is clearly not the case.
\end{proof}

\newcommand{\dummy}[1]{}

\end{document}